\newcounter{thm}
\newtheorem{theorem}[thm]{Theorem}
\newtheorem{lemma}[thm]{Lemma}
\newcommand*{\rom}[1]{\uppercase\expandafter{\romannumeral #1\relax}}
\newcommand{\eps}{{\varepsilon}} 
\newcommand{\ket}[1]{{|#1\rangle}}   
\newcommand{\bra}[1]{{\langle#1|}}   
\newcommand{\erf}{{\mathrm{erf}}}
\newcommand{\tr}{{\mathrm{tr}}}
\title{Performance of Quantum Preprocessing under Phase Noise}
\author{
   \IEEEauthorblockN{Zuhra Amiri\IEEEauthorrefmark{1}, Boulat A. Bash\IEEEauthorrefmark{2}, Janis N\"otzel\IEEEauthorrefmark{1}}
   \IEEEauthorblockA{
       \IEEEauthorrefmark{1}Emmy-Noether Group Theoretical Quantum Systems Design,  Lehrstuhl f\"ur Theoretische Informationstechnik,\\ Technische Universit\"at M\"unchen\\ \IEEEauthorrefmark{2}Electrical \& Computer Engineering, Optical Sciences, University of Arizona, Tucson, Arizona, USA\\
  \{zuhra.amiri, janis.noetzel\}@tum.de, boulat@arizona.edu 
    }
    \thanks{\copyright 2022 IEEE. Personal use of this material is permitted. Permission from IEEE must be obtained for all other uses, in any current or future media, including reprinting/republishing this material for advertising or promotional purposes, creating new collective works, for resale or redistribution to servers or lists, or reuse of any copyrighted component of this work in other works.}
}
\begin{document}

\maketitle
\begin{abstract}
    Optical fiber transmission systems form the backbone of today's communication networks and will be of high importance for future networks as well. Among the prominent noise effects in optical fiber is phase noise, which is induced by the Kerr effect. This effect limits the data transmission capacity of these networks and incurs high processing load on the receiver. At the same time, quantum information processing techniques offer more efficient solutions but are believed to be inefficient in terms of size, power consumption and resistance to noise. Here we investigate the concept of an all-optical joint detection receiver. We show how it contributes to enabling higher baud-rates for optical transmission systems when used as a pre-processor, even under high levels of noise induced by the Kerr effect.
\end{abstract}

\begin{section}{Introduction}
    The present and the future of our societies relies more and more on high speed connectivity between a growing number of services. Advanced communication systems support the operation of everything from communication between machines to communication between humans. Real-time video services as well as emerging applications like telemedicine and connected cars will further increase the demand for connectivity, which can be anticipated from the Cisco annual internet report \cite{cisco}. At the same time, increasingly smaller wireless cells will put a tremendous load on the optical fiber backbone, where not only increasing data rates but also reduced latency and lower power consumption are demanded at the same time \cite{6Grequirements}. 
    
    In recent works \cite{6Gquantum} it has thus been speculated that quantum information processing (QIP) technologies should play a larger role in the development of the next generation of mobile networks. Quantum primitives such as squeezed light and entanglement at transmitter and receiver can significantly improve communication rates. However, the drawback of QIP is that many solutions only exist on paper, with many mathematical constructs lacking their physical counterpart, and existing ones being bulky and requiring excessive cooling to combat environmental noise, which make them hard to use in practice.
    
    In this work, we investigate a QIP technique, namely the concept of the joint detection receiver (JDR), which promises a practical realization. In the same way as Shannon's work led engineers to perform error correction over multiple received bits to reach superior performance, the work \cite{Holevo1998c,Giovannetti2015} of Holevo and follow-up research motivates to perform error correction over multiple received \emph{pulses}. This latter operation will in practice be carried out by a JDR. Such JDR can be thought of as operating fully in the optical domain, ultimately producing a bit which is handed over to a higher layer for processing. In \cite{guha2011structured} the first proposal for the design of a JDR has been made. Due to its superior performance in the low photon number regime, this device has previously be seen as an optimal choice for deep space communications. While this is certainly true, in the recent work \cite{noetzelrosati2022} it has been pointed out that power-limited communication under high baud-rates also inevitably leads to a low number of received photons per pulse. Observed trends for baud-rates \cite{highBaudrateComms} let us conjecture that future systems ten to twenty years from now will likely operate at baud-rates in the area from $300$GBd to well above $400$GBd. As techniques emerge which even allow the conversion of entire frequency domains (e.g. C- to O-band) \cite{cbandConverter}, there does not seem to be a natural limit for increasing baud-rates, but rather technological hurdles to be overcome. However, when increasing baud-rates under a power limit, current systems fall short of realizing any reasonable gain \cite{noetzelrosati2022}. In sharp contrast, systems utilizing the QIP technique of joint detection can be expected to benefit from the observed trend \cite{highBaudrateComms}.\\
    Data transmission techniques utilizing optical fiber need to deal with fiber nonlinearities. In this domain, the recent work \cite{ludovicoWilde} has discovered corresponding capacities. As any realization of the QIP potential will have to be based on practical design, we study here the performance limits of a practically implementable design based on phase shift keying rather than information-theoretic bounds. Under any fixed power limit, high baud-rates will eventually induce the low photon numbers where the JDR technology outperforms its classical analog, and in such domain it is inevitably at some point optimal to use, among all phase shift keying formats, the one with the lowest modulation order. Thus in this work, we study binary phase shift keying (BPSK).\\
    Our model for phase noise is derived from the Kerr effect, which results in so-called phase noise and has been the subject of investigation for example in the recent work \cite{kunzParisBanaszekKerrMedium}. 
    In our analysis, we utilize this model in simulations and in the derivation of an analytical channel model. 
\end{section}

\begin{section}{System Model and Notation}
    Consider a channel with complex valued input and output 
    \begin{align}
        Y = \tau\cdot X\cdot e^{\mathbbm{i}\Phi}, X\in\mathbbm C, \phi \in [0, 2\pi)
    \end{align}
    where $X$ is the signal, $\tau=e^{-a\cdot L}\in[0,1]$ the transmittivity, $a$ the attenuation coefficient, $L$ the fiber length in $km$, and $e^{-\mathbbm{i}\Phi}$ a random phase noise term \cite{kramerSystemModel}.
    When investigating the potential of a JDR for such system, it is sufficient to replace $X$ by a coherent state $|X\rangle$ which is an element of the Fock space $\mathcal F$ \cite{banaszekQuantumLimits}.     The sender encodes their message into coherent states $|\alpha\rangle=\exp(-|\alpha|^2/2)\sum_{n=0}^\infty\alpha^n/\sqrt{n!}|n\rangle$, where $\{\ket{n}\}_{n\in\mathbb N}$ is the photon number basis of $\mathcal F$. For binary phase shift keying (BPSK), the set of signals states is $S_1=\{\ket{\alpha},\ket{-\alpha}\}$. The signal energy per pulse is given by $E= \hbar \omega_0 |\alpha|^2$, with $\hbar \omega_0$ being the energy of a single photon at carrier frequency $\omega_0$. To use the joint detection concept discovered in \cite{guha2011structured} for BPSK signals one can construct a code book of $2\cdot n$ signals $\ket{v_k(s\alpha)}$, $k=1,\ldots,n$, $s=\pm1$ with code-words taking the form 
    \begin{equation}
        \ket{v_k(\alpha)} = \otimes_{j=0}^{n-1} \ket{(H_n)_{j,k} \alpha},
    \end{equation}
    where the symmetric matrix $H_n$ is defined as
    \begin{equation}
        (H_n)_{j,k} = (-1)^{j\cdot k},\quad j \cdot k = \textstyle\sum_{t=0}^{\log_2 n} j_t k_t,
    \end{equation}
    with $j \cdot k$ being the bitwise scalar product of the binary representations of $j,k = 0,...,n-1$ \cite{rosati2016}. The receiver employs a Hadamard transformation $\hat{U}_{Had}^{(n)}$, resulting in the state
    \begin{align}
        \hat{U}_{Had}^{(n)}\ket{v_k(\alpha)} = \ket{w_k(\alpha)} = \ket{\sqrt{n} \alpha}_k \left( \otimes_{j\neq k} \ket{0}_j \right).
    \end{align}
    As can be seen, all but one output mode of the Hadamard receiver are in the vacuum state, and the desired phase can be decoded from the $k$-th output mode in state $\ket{\sqrt{n}\alpha}$. In this work, we employ homodyne detection for this task. $\hat{U}_{Had}^{(n)}$
    can be implemented using an array of beamsplitters. Each beamsplitter $U_{BS}$ transforms an incoming pair of coherent states as 
    \begin{align}\label{eqn:beamsplitter}
        U_{BS}\ket{\alpha}\otimes\ket{\beta} = \ket{(\alpha+\beta)/\sqrt{2}}\otimes\ket{(\alpha-\beta)/\sqrt{2}}.
    \end{align}
    The receiver's performance can be evaluated by using classical statistical tools applied to the complex numbers $\alpha, \beta$. To account for the Kerr effect, we consider phase noise. This leads to states of the form $\ket{\alpha e^{\mathbbm{i}\phi}}$ with $\phi$ a random phase. 
    For a sequence $\Phi^n:=(\phi_1,\ldots,\phi_n)$ of such phases, the output of the Hadamard receiver of order $n=2^K$ at port $k'$ given input $\nu_k(\alpha)$ is
    \begin{equation}
      \ket{\Lambda^n_{k,k'}(\alpha)}:=\ket{2^{-K/2}\alpha\left(\sum_{m=1}^n H_{k,m}H_{m,k'} e^{\mathbbm{i} \phi_m}\right)}.
\end{equation}
In order to quantify the distribution of the phases $\phi$, we consider the phase noise model derived in  \cite{kunzParisBanaszekKerrMedium}, which lets $\phi$ be distributed according to a normal distribution with a variance that is with our parameter choices estimated as 
\begin{align}
    \sigma^2\approx 6 \cdot 10^{-19}\cdot b,\label{eqn:phase-noise}
\end{align}
where $b$ is the baud-rate (see Section \ref{subsection:noise model}). Each of the homodyne detectors at the output ports of the receiver operates based on a threshold $\eps>0$ as follows: If $m$ is the measurement result of the detector, then the received signal will be set to $\alpha$ if $m>\eps$, $0$ if $m\in(-\eps,\eps)$ and $-\alpha$, else. This yields a statistical input-output relation (see Subsection \ref{subsec:detection} for details)
\begin{align}\label{eqn:conditional-distribution}
    q(y^n|\alpha,k)=p_C^\epsilon(y_k|\alpha)\prod_{k'\neq k}p_I^\epsilon(y_{k'}|\alpha)
\end{align}
where $y^n\in\{-\alpha,0,\alpha\}$ are detection results at the $n=2^K$ output ports of the receiver. $p_C$ and $p_I$ denote the correct and incorrect detection probabilities respectively.
The capacity of the system without the JDR is calculated as $C(b,E):=b\cdot\max_{p_A}I(Y;A)$ where $p_A$ is the distribution of the phases, and as 
\begin{align}\label{def:hadamard-capacity}
    C(b,E,n):=b\cdot I(Y^n;AN)/n
\end{align}
with $A$ and $N$ uniformly distributed over $\{-\alpha,\alpha\}$ and $\{1,\ldots,n\}$ if quantum pre-processing, JDR, is used.
\end{section}

\begin{section}{Results}
To analyze the statistical properties of the received signal we set  $t(i,k):=(H_{i,m}H_{k,m})_{m=1}^n$. It holds $t(i,k)_m\in\{-1,1\}$ and $\sum_mt(i,k)_m=\delta(i,k)$. The received signal then reads as
\begin{align}
    \ket{\Lambda^n_{k,k'}(\alpha)}
    = \begin{cases}
    \ket{2^{-K/2}\alpha \sum_{m}e^{\mathbbm{i}\phi_m} }&,k=k'\\
    \ket{2^{-K/2}\alpha \sum_{m}t(k,k')_me^{\mathbbm{i}\phi_m}  }&,\mathrm{else}\label{eqn:received-signal}
    \end{cases}.
\end{align}
Equation \eqref{eqn:received-signal} naturally allows us to state the following:
\begin{theorem}\label{thm:result}
    Let for each $K\in\mathbb N$ $n:=2^K$ and let the random variables $\Phi_1,\ldots,\Phi_n$ be i.i.d according to a measure $\mu$ on $[0,2\pi)$ and $t>0$. For all $k,k'=1,\ldots,n$:
    \begin{align}
        \mathbb P\bigg(\bigg|\Lambda^n_{k,k'}(\alpha) - \sqrt{n}\alpha\delta(k,k')\mathbb E_\mu [e^{\mathbbm{i}\Phi}]\bigg|\geq \tfrac{t|\alpha|}{\sqrt{n}}\bigg)&\leq 4e^{-t^2/n}.
    \end{align}
\end{theorem}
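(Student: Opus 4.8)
The plan is to strip away the deterministic prefactors, reduce the statement to a concentration bound for a sum of independent, bounded, mean-zero complex random variables, and then control that sum by splitting it into its real and imaginary parts and applying a Hoeffding-type inequality to each.

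First I would use $2^{-K/2}=1/\sqrt n$ together with the sign pattern $s_m:=t(k,k')_m\in\{-1,1\}$ from \eqref{eqn:received-signal} to write
\[
\Lambda^n_{k,k'}(\alpha)=\frac{\alpha}{\sqrt n}\sum_{m=1}^n s_m e^{\mathbbm{i}\phi_m}.
\]
Taking expectations and using that $\sum_m s_m$ equals $n$ when $k=k'$ and $0$ otherwise (row-orthogonality of $H_n$) shows that the centering term in the theorem is exactly the mean, so that, for $\alpha\neq 0$,
\[
\Lambda^n_{k,k'}(\alpha)-\sqrt n\,\alpha\,\delta(k,k')\,\mathbb E_\mu[e^{\mathbbm{i}\Phi}]=\frac{\alpha}{\sqrt n}\,W,\qquad W:=\sum_{m=1}^n s_m\big(e^{\mathbbm{i}\phi_m}-\mathbb E_\mu[e^{\mathbbm{i}\Phi}]\big).
\]
Dividing the event in the theorem by $|\alpha|/\sqrt n$ then identifies it with $\{|W|\ge t\}$, so it remains to prove $\mathbb P(|W|\ge t)\le 4e^{-t^2/n}$ for the sum $W$ of the independent, mean-zero summands $Z_m:=s_m(e^{\mathbbm{i}\phi_m}-\mathbb E_\mu[e^{\mathbbm{i}\Phi}])$.

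Next I would decompose into real and imaginary parts. Since $|W|^2=(\operatorname{Re}W)^2+(\operatorname{Im}W)^2$, the event $\{|W|\ge t\}$ is contained in $\{|\operatorname{Re}W|\ge t/\sqrt2\}\cup\{|\operatorname{Im}W|\ge t/\sqrt2\}$. Each of $\operatorname{Re}W=\sum_m s_m(\cos\phi_m-\mathbb E_\mu[\cos\Phi])$ and $\operatorname{Im}W=\sum_m s_m(\sin\phi_m-\mathbb E_\mu[\sin\Phi])$ is a sum of $n$ independent, mean-zero summands, each supported on an interval of length at most $2$ because $\cos,\sin$ are bounded by $1$ and $s_m=\pm1$. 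Applying Hoeffding's inequality to each part and combining the two two-sided estimates by a union bound produces a bound of the shape $4\exp(-c\,t^2/n)$, which is the structure asserted by the theorem.

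The main obstacle is the bookkeeping of the constant $c$ in the exponent. The generic range-$2$ Hoeffding bound, together with the $\sqrt2$ loss from the real/imaginary split, naturally yields an exponent proportional to $t^2/n$ but with a constant smaller than $1$; matching the stated $e^{-t^2/n}$ exactly is therefore the delicate point, and I would expect it to require either the sharp sub-Gaussian parameter for a mean-zero variable on a length-$2$ interval, a geometric argument exploiting that $e^{\mathbbm{i}\phi_m}$ lives on the unit circle, or a re-examination of whether the prefactor $4$ can absorb the remaining slack. Everything else — the identification of the mean via Hadamard orthogonality, the reduction to $\{|W|\ge t\}$, and the real/imaginary decomposition — is routine.
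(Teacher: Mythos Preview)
Your approach is essentially identical to the paper's: write $\Lambda^n_{k,k'}(\alpha)=\tfrac{\alpha}{\sqrt n}\sum_m H_{k,m}H_{m,k'}e^{\mathbbm{i}\Phi_m}$, identify the mean via Hadamard orthogonality, split into real and imaginary parts, and apply Hoeffding's inequality to each before combining with a union bound. The paper's proof is in fact terser than yours and does not go beyond this outline.

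Your caveat about the exponent constant is well placed. The standard range-$2$ Hoeffding bound together with the $t/\sqrt2$ split yields $4e^{-t^2/(4n)}$, not $4e^{-t^2/n}$; the paper's own proof displays $4e^{-t^2/(2n)}$ for the diagonal case and $4e^{-t^2/n}$ for the off-diagonal case without reconciling either with the theorem's stated exponent. So the discrepancy you flagged is not a gap in your argument but a looseness already present in the paper; the qualitative $\exp(-c\,t^2/n)$ shape is all that is actually established, and that is what the subsequent discussion (e.g., the choice $t=c\,n^{3/4}$) uses.
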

We conclude the Hadamard receiver asymptotically transforms any signal $\alpha H_{1k},\ldots\alpha H_{nk}$ into output $\approx\sqrt{n}\alpha\mathbb E_\mu[e^{\mathbbm{i}\Phi}]$ at output port $k$, whereas at output port $k'\neq k$ one receives almost no signal. To optimize the overall system performance we approach the problem of optimizing the homodyne detectors at the output ports of the system as follows: 
\begin{align}\label{eqn:mathematical-problem-statement}
    \eps_{\max}:=\arg\max_{\eps}\ & p_C^{\epsilon}(\alpha|\alpha)\prod_{k\neq k'}p_I^{\epsilon}(0|0).
\end{align}
This approach is motivated from three observations: {\bf{ 1.}} The calculation of $\epsilon$ in \eqref{eqn:conditional-distribution} is becoming more challenging as $n$ grows. {\bf 2.} As $n$ grows, the central limit theorem predicts that $\Lambda^{n}_{k,k'}(\alpha)$ will be approximately a normal distribution. {\bf 3.} the variance $\sigma$ of said normal distribution can be efficiently computed. Turning the observation {\bf 2.} into a quantitative \emph{assumption}, we can prove that setting the detection parameters according to the solution of \eqref{eqn:mathematical-problem-statement} yields near-optimal results once the solution $\eps_{\max}$ yields $p_C^{\epsilon_{\max}}(\alpha|\alpha)\prod_{k\neq k'}p_I^{\epsilon_{\max}}(0|0)\approx1$:
\begin{theorem}\label{thm:capacity-bound}
For all $b,E,n$ we have $C(b,E,n)\leq b\cdot\log(n)/n$. If $p_I^\epsilon(y|\pm\alpha)=d(y)$ for all $y\in\{-\alpha,0,\alpha\}$ and  $p_C^\epsilon(\alpha|\alpha)\geq1-\delta$ and $d(0)\geq1-\delta$ and $\delta<n^{-1}$, then 
\begin{align}
    C(b,E,n)\geq b\left((1-\delta)\log n - 5\cdot h(\delta)\right)/n,
\end{align}
where $h$ is the binary entropy.
\end{theorem}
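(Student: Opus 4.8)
The plan is to treat the two inequalities separately. The converse $C(b,E,n)\le b\log(n)/n$ is the elementary cardinality bound: since $A$ and $N$ are independent and uniform, $(A,N)$ is uniform on a set of size $2n$, so $I(Y^n;AN)\le H(AN)=\log(2n)$, and dividing by $n$ gives the claimed rate up to the additive $\log 2$. All of the real work is in the achievability bound.

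For the lower bound I would start from $I(Y^n;AN)=\log(2n)-H(AN|Y^n)$ and bound the equivocation. The naive route---build a decoder that reads off the unique fired port and apply Fano---fails: the probability that \emph{some} incorrect port fires is only controlled by a union bound as $(n-1)\delta$, and Fano then charges $\log(2n)$ bits per error, producing an $n\delta\log n$ term that is far too large (it does not vanish when $\delta\sim n^{-1}$). \textbf{The main obstacle is precisely to avoid this factor-$n$ loss}, and the resolution is that an ``error'' here is never a uniform confusion over all $2n$ messages but only a local confusion among the few ports that happen to fire; this must be accounted for directly rather than through Fano.

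Concretely, I would introduce the indicator $Z:=\mathbbm 1[Y_N=A]$. By the hypothesis $p_C^\epsilon(\alpha|\alpha)\ge1-\delta$ (and the sign symmetry of the setup) one has $\Pr[Z=0]\le\delta$, hence $H(Z|Y^n)\le h(\delta)$. The chain rule gives
\begin{align}
H(AN|Y^n)\le h(\delta)+\Pr[Z=0]\,H(AN|Y^n,Z=0)+H(AN|Y^n,Z=1),
\end{align}
where the middle term is at most $\delta\log(2n)$ and supplies the target $\delta\log n$. The decisive step is the last term: given $Z=1$ one has $A=Y_N$, so the only remaining uncertainty is which \emph{fired} port is the true one. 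Writing $F$ for the set of ports with $Y_j\neq0$ and $W:=|F|-1$ for the number of incorrectly-fired ports, $(A,N)$ takes at most $|F|=1+W$ values, whence $H(AN|Y^n,Z=1)\le\mathbb E[\log(1+W)]$. Since $p_I^\epsilon(y|\pm\alpha)=d(y)$ with $d(0)\ge1-\delta$, $W$ is a sum of $n-1$ independent indicators each of mean at most $\delta$ and is independent of $Z$, so Jensen's inequality yields $\mathbb E[\log(1+W)]\le\log(1+\mathbb E W)\le\log(1+n\delta)<\log 2$, the final strict inequality using the hypothesis $\delta<n^{-1}$.

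Collecting the three contributions gives $H(AN|Y^n)\le h(\delta)+\delta\log(2n)+\log(1+n\delta)$, so that
\begin{align}
I(Y^n;AN)\ge(1-\delta)\log n+\big[(1-\delta)\log 2-\log(1+n\delta)\big]-h(\delta).
\end{align}
The bracket is at least $-\delta$ because $\log(1+n\delta)<\log 2$, and $\delta\le h(\delta)$, so in fact $I(Y^n;AN)\ge(1-\delta)\log n-2h(\delta)$, which is stronger than the stated bound (the constant $5$ is not tight). I expect the genuine difficulty to be conceptual rather than computational: recognising that the $\log 2$ hidden in $H(AN)=\log(2n)$ is exactly the budget needed to absorb the $\log(1+n\delta)$ contribution of the spuriously-firing ports, and that the regime $\delta<n^{-1}$ is what guarantees this budget suffices.
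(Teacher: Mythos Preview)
Your argument is correct and in fact yields the sharper constant $2$ in place of $5$. It is, however, a genuinely different route from the paper's. The paper works with the \emph{other} decomposition $I(Y^n;AN)=H(Y^n)-H(Y^n|AN)$: it lower-bounds $H(Y^n)$ by restricting the output distribution to the ``typical'' set $A=\{y^n:N(0|y^n)=n-1\}$ and using concavity to extract $\lambda\log(2n)$ with $\lambda=d(0)^{n-1}g(\alpha|\alpha)$; it then upper-bounds $H(Y^n|AN)$ by a direct computation exploiting the product form $g(y_k|x)\prod_{k'\neq k}d(y_{k'})$, obtaining a term $(n-1)\delta\,h(\delta)$ which is absorbed only after invoking $\delta<n^{-1}$. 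Your approach instead bounds the equivocation $H(AN|Y^n)$ via the auxiliary $Z=\mathbbm 1[Y_N=A]$ and the list-decoding observation that, given $Z=1$, the confusion set is just the set of fired ports, whose expected log-size is $\le\log(1+n\delta)$. Both proofs use the hypothesis $\delta<n^{-1}$ at the same conceptual point---to guarantee that the expected number of spurious fires is below one---but yours makes this transparent and avoids the explicit entropy calculus on $\{-\alpha,0,\alpha\}^n$. The paper's route, on the other hand, exposes the structure of the output law $q$ and the role of the quantity $d(0)^{n-1}g(\alpha|\alpha)$, which is precisely the objective of the detector-threshold optimization in \eqref{eqn:mathematical-problem-statement}; that connection is less visible from the equivocation side. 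Neither proof addresses the missing $\log 2$ in the stated upper bound $C(b,E,n)\le b\log(n)/n$, which you correctly flag.
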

We utilize problem statement \eqref{eqn:mathematical-problem-statement} to obtain numerically values for setting the detection threshold $\epsilon$. Thereby we obtained lower bounds for $C(b,E,n)$ for different values of $b$. In this process, we replaced the Gaussian distribution of $\Phi$ with a von Mises distribution (see Subsection \ref{subsection:noise model} for details and figures \ref{fig:mutualinfo} and \ref{fig:capacities} for numerical results concerning our particular application). Using the von Mises distribution reduced the dependence of Eq. \eqref{eqn:mathematical-problem-statement} on the Hadamard receiver size, as explained in Subsubsection \ref{subsubsec:detector-optimization-procedure}.

Our simulation results show the different capacities with increasing baud-rates and phase noise: 
\begin{figure}[h]
    \includegraphics[width=.48\textwidth]{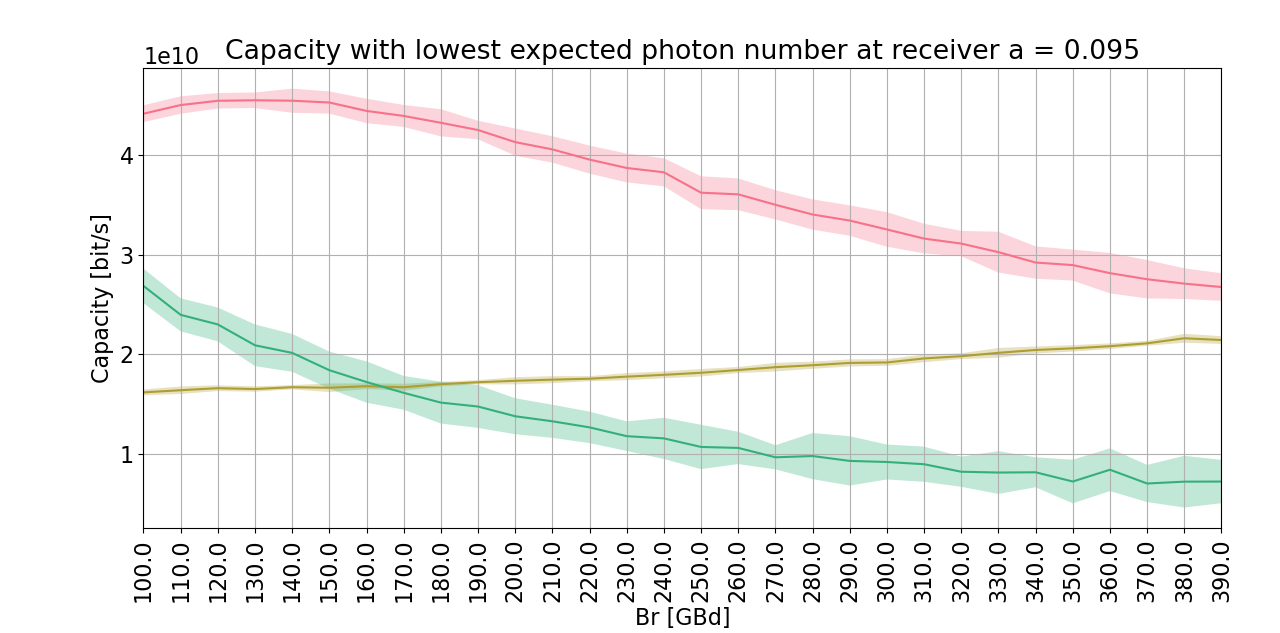}
    \caption{Average capacity plotted over the baud-rate. The green line represents the classical Shannon capacity, the red line the capacity of the Hadamard receiver with order $n=4$ and the brown line with order $n=32$. At $130Gbd$, the received photon number is approximately $0.29$ photons per pulse. The shaded regions are error bars derived from the empirical variance of simulation results. We used the attenuation coefficient $ a = 0.046$ and fiber length $L=250km$ resulting in a transmittivity of $\tau\approx10^{-5}$.}
    \label{fig:capacitybaudrates}
\end{figure}
As the performance of the JDR and of the standard homodyne receiver are both hard to evaluate analytically for specific parameters, the performance was simulated \cite{ourCode}. For the attenuation we used the formula $\tau=\exp{-0.046\cdot L}$ modelling approximately optical fibers \cite{kunzParisBanaszekKerrMedium} of length $L$ kilometers.  The attenuation coefficient $a = 0.046$ was taken from \cite[Table 1]{kunzParisBanaszekKerrMedium}. It models attenuation in SMF-28 fiber for communication in the C-band at $1550$nm, which is the most widely used system choice in communication networks.

\begin{figure}
    \includegraphics[width=.48\textwidth]{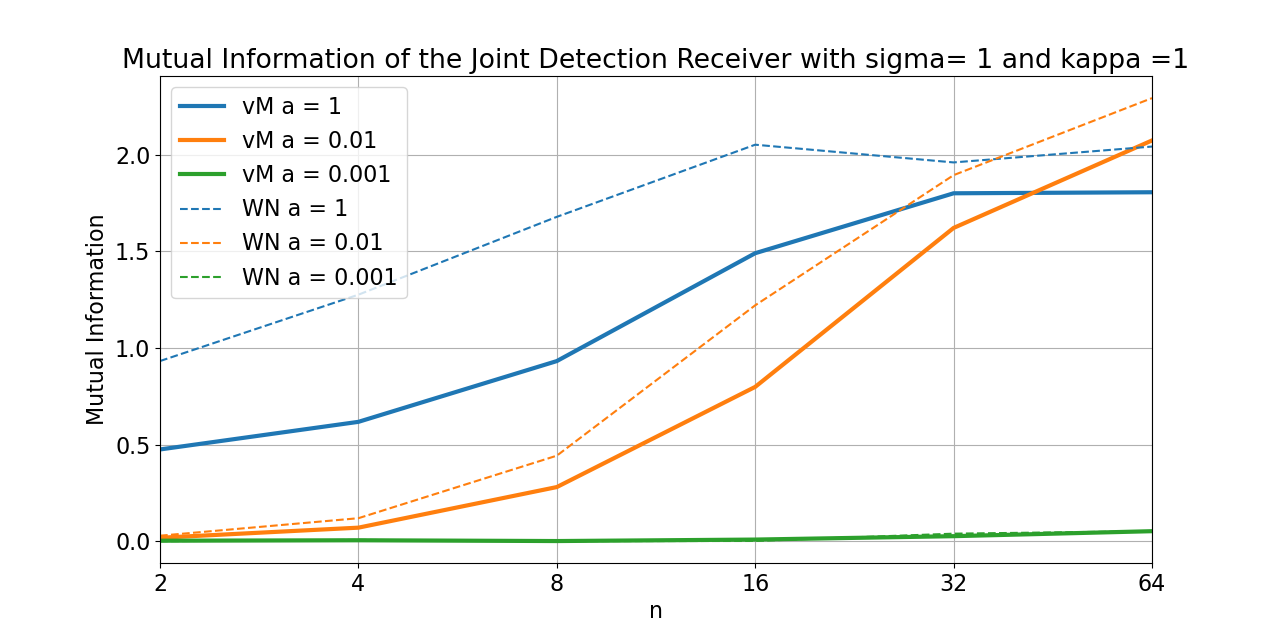}
    \caption{In this Figure the mutual information of the Hadamard receiver is plotted over the Hadamard order $n$. The dashed lines represent the capacities with the wrapped normal distribution (WN) and the solid lines are representing the von Mises(vM) distribution chosen as the phase noise distribution.}
    \label{fig:mutualinfo}
\end{figure}
\begin{figure}
    \includegraphics[width=.48\textwidth]{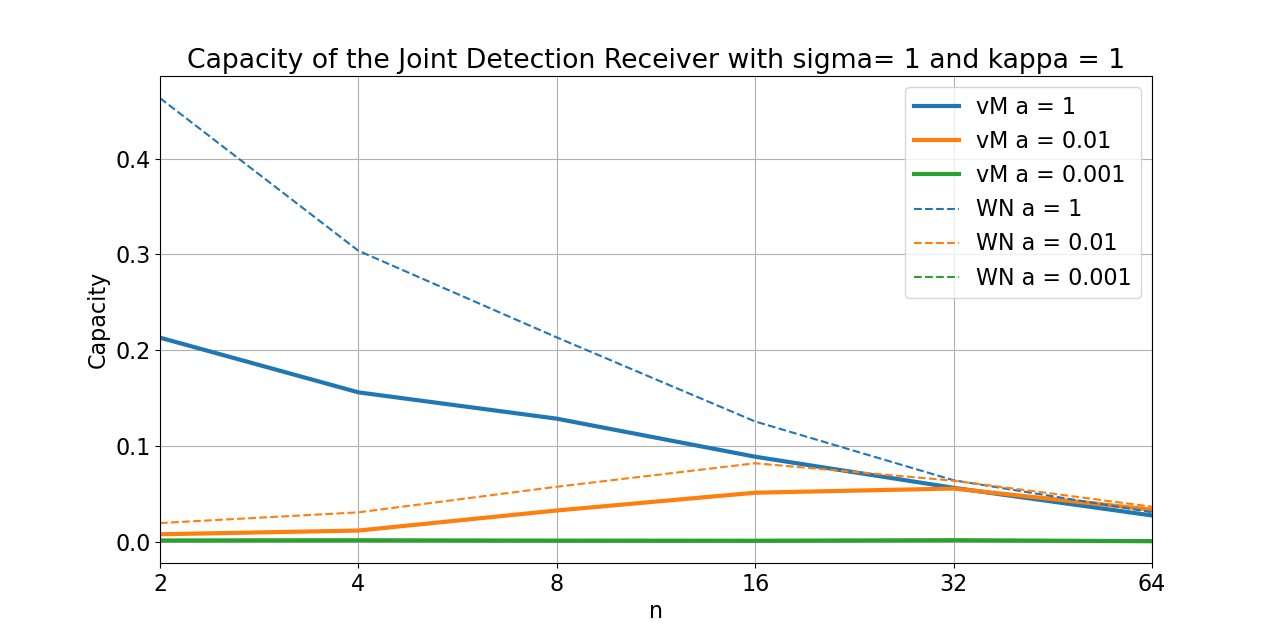}
    \caption{In this Figure the capacity of the Hadamard receiver is plotted over the Hadamard order $n$. The dashed lines represent the capacities with the wrapped normal distribution (WN) and the solid lines are representing the von Mises (vM) distribution chosen as the phase noise distribution.}
    \label{fig:capacities}
\end{figure}
    
Figure \ref{fig:capacitybaudrates} displays simulation results for $L=250km$ it can be seen that with quantum pre-processing, we can achieve a higher capacity at high baud-rates than with an only classical system. From Figure \ref{fig:capacitybaudrates} we can also see that higher Hadamard orders ($n=32$) tend to be more beneficial when the ratio of $E/b$ is extremely small, while lower orders $(n=4)$ can already bring performance improvements. The maximum of the red curve at around $130$Gbd restates the fact that JDRs outperform conventional receivers in the low photon regime where $\tau\cdot E/b$ is small ($\ll 1$) for our specific design and noise model. 
\section{Discussion}
\subsubsection{Advantage of the Hadamard receiver}
While it was known \cite{rosati2016, guha2011structured} that the Hadamard receiver outperforms a conventional receiver, our work clarifies that such advantage persists given a scaling of phase noise with baud-rate in a situation where the Hadamard receiver is used only as a pre-processing stage in the design. Note here that we ignore further non-idealities, such as the common mode rejection ratio in the homodyne receiver, or the estimation of global phases.
\subsubsection{Accuracy of Approximation with Van Mises Distribution}
From figures \ref{fig:mutualinfo} and \ref{fig:capacities} one can see that the results obtained by using the von Mises distribution get closer to those obtained based on the wrapped normal distribution, if $n$ grows. 
\subsubsection{Asymptotics for Problem \eqref{eqn:mathematical-problem-statement}}
By setting $t = c\cdot n^{3/4}$ (with suitable choice of $c$) in Theorem \ref{thm:result} one can deduce that, for every $\delta\in(0,1)$, problem statement \eqref{eqn:mathematical-problem-statement} yields a solution $\eps_{max}\leq\delta$: Namely,
    \begin{align*}
        \mathbb P(|\Lambda^n_{k,k'}(\alpha) - \sqrt{n}\alpha\delta(k,k')\mathbb E_\mu [e^{\mathbbm{i}\Phi}]|\geq c|\alpha|n^{1/4})&\leq 4e^{-\sqrt{n} c^2}
    \end{align*}
so that for $c=\mathbb E_\mu[e^{\mathbbm{i}\Phi}]$ and every choice $(\alpha,k)$ of the transmitter the received signal at port $k$ is concentrated in the interval between $\alpha\sqrt{n}\mathbb E_\mu[e^{\mathbbm{i}\Phi}](1\pm n^{-1/4})$ while the received signal at every other port is in another interval between numbers $\alpha\sqrt{n}\mathbb E_\mu[e^{\mathbbm{i}\Phi}](0\pm n^{-1/4})$.
Based on the detection probabilities listed in Subsection \ref{subsec:detection}, we see that for every $\delta>0$ and large enough $n$ a threshold $\eps_n:=\alpha\sqrt{n}\mathbb E_\mu[e^{\mathbbm{i}\Phi}]/2$ is asymptotically good enough to simultaneously achieve $d(0)\geq1-\delta$ and $p_C^\eps(\alpha|\alpha)\geq1-\delta$.
\subsubsection{Detector Optimization Procedure\label{subsubsec:detector-optimization-procedure}}
Finding the optimal solution to \eqref{eqn:mathematical-problem-statement} for arbitrary distributions is a challenging task. 
For the simulations, we thus used a semi-heuristic approach which makes use of the Gaussian shape $\Lambda^n(\alpha)$ when $n$ grows. The basic idea is as follows: For an empirical mean $\bar{X}_n = 1/n \sum_n X_i$, with $X_i$ being the phases $e^{\mathbbm{i}\Phi_i}$, with mean $\mathbb E_\mu[e^{\mathbbm{i}\Phi}]$ and variance $\sigma^2$ of $X_i$, we know from the central limit theorem that, asymptotically,  $\bar{X}_n$ will be distributed as $N(\mathbb E_\mu[e^{\mathbbm{i}\Phi}], \sigma^2/n)$. Since we have $\Lambda_{k,k}^n(\alpha)=\alpha\sqrt{n}\bar{X}_n$ we thus know that (asymptotically) $\Lambda_{k,k}^n(\alpha)$ will be distributed as $N(\alpha\sqrt{n}\mathbb E_\mu[e^{\mathbbm{i}\Phi}], |\alpha|^2\sigma^2)$,
and similarly we get for $k\neq k'$ that $\Lambda_{k,k'}^n(\alpha)$ will be distributed according to $N(0, 2|\alpha|^2\sigma^2)$.

Since the expectation value and variance of the von Mises distribution with zero mean is given by the first and second raw moment respectively (see Lemma \ref{lem:raw moments} and Equation \eqref{def:kappa})
\begin{align}
    \mathbb E_\mu[e^{\mathbbm{i}\Phi}] &= \frac{I_1(\kappa)}{I_0(\kappa)}\\
      \sigma^2_{vM} &= \left(\frac{I_2(\kappa)}{I_0(\kappa)}-\frac{I_1(\kappa)^2}{2I_0(\kappa)^2}\right) - \frac{I_1(\kappa)^2}{2I_0(\kappa)^2},
  \end{align}
we can find, for every $n=2^K$, the optimal $\eps$ in \eqref{eqn:mathematical-problem-statement} by taking into account the detection probability \eqref{eq:detectalpha} as well as the assumed Gaussian shape \begin{align}
    f_t(\beta|\alpha):=\tfrac{1}{t\sigma_{vM}\sqrt{2\pi}}\exp{-\left(\tfrac{\beta-\sqrt{n}E_\mu[e^{\mathbbm{i}\Phi}]\alpha}{2t\sigma_{vM}}\right)^2}
\end{align} 
of the received signal, to arrive at
\begin{align}
    p_C^\eps(\alpha|\alpha)=\int_{\mathbb R}f_1(\beta|\alpha) \frac{1}{2}(1-\erf(\sqrt{2}(\beta-\eps)))d\beta
\end{align}
\begin{align}
    d(0)=\int_{\mathbb R} \frac{f_2(\beta|0)}{2}\sum_{x=0}^1\erf(\sqrt{2}(\eps +(-1)^x\beta))d\beta,
\end{align}
with $\erf(a)$ being the error function
    \begin{align}
        \erf(a) = \frac{2}{\sqrt{\pi}} \int_0^a d\xi e^{-\xi^2}.
    \end{align}
As the computation of $p_C^\eps(\alpha|\alpha)d(0)^{n-1}$ does not depend critically on $n$ anymore, the calculation of $\eps_{\max}$ is efficient even for large values of $n$. This is important for the regime where the ratio $\alpha\sqrt{n}/\sigma$ becomes low. When baud-rate and noise level are low, $\eps=\alpha\sqrt{n}\mathbb E_\mu[e^{\mathbbm{i}\Phi}]/2$ will be a sufficient choice.

\end{section}

\begin{section}{Methods and Proofs}

    \subsection{Noise Model}\label{subsection:noise model}
    Setting $\xi:=\gamma\hbar\omega_0/(a\cdot b)$ where $\gamma=1$ is a nonlinear interaction coefficient (discussed below \cite[Equation (7)]{kunzParisBanaszekKerrMedium}) and $a$ the attenuation parameter we let $N$ be the transmitted number of photons per second so that we get \cite[Eq. (11)]{kunzParisBanaszekKerrMedium} 
    \begin{align}
        \sigma^2=4\cdot\xi^2\cdot N \cdot b^{-1}\cdot\left(2-\tau-\tau(1-\log\tau)^2\right),
    \end{align}
    with $\tau\in(0,1)$ being the transmittivity.
    Assuming transmission at $1550$nm over a standard SMF-28 fiber with attenuation parameter $a=0.046$ \cite[Table 1]{kunzParisBanaszekKerrMedium} we arrive  at
    \begin{align}
        \xi=2.8\cdot 10^{-18}\cdot b,
    \end{align}
    and thus since $\tau=e^{-a\cdot 250}\ll1$ we get
    \begin{align}
        \sigma^2\approx6\cdot 10^{-35}\cdot b\cdot N.
    \end{align}
    For $\approx1mW$ transmit power we use $N=10^{16}$ and therefore
    \begin{align}
        \sigma^2\approx6\cdot 10^{-19}\cdot b.
    \end{align}
    In order to recover the dependence of the phase noise on the signal energy we note that $1/\sigma^2$ plays a role similar to that of $\kappa$ in the von Mises distribution, so that we set
    \begin{align}\label{def:kappa}
        \kappa=10^{19}\cdot b^{-1}/6.
    \end{align}
    As has been noted in \cite{colletLewisDiscriminatingWNDandVMD} the difference between the two models is negligible for sample sizes below $200$. For large sample sizes it is visible for values of $\kappa$ in the interval $(0.1,10)$.

    \subsection{Detection}\label{subsec:detection}
    We use the homodyne detector at each output port of the receiver, with the following POVMs
    \begin{align}
        \Pi_x = \ket{x}\bra{x},
    \end{align}
    with 
    \begin{align}
        \ket{x} =  \left(2/\pi\right)^{\frac{1}{4}} e^{-x^{2}}\textstyle{\sum_{n=0}^{\infty}} H_{n}(\sqrt{2}x)/\sqrt{2^{n}n!}\ket{n},
    \end{align}
    with $H_n(x)$ being Hermite polynomials.
    Then we have the outcome $x$ of the homodyne detector as
    \begin{align}
        p(x|\beta) = \tr\{ \ket{\beta}\bra{\beta}\Pi_x\} = \sqrt{2/\pi}\exp(-2(x-\beta))^2,
    \end{align}
    with $\beta$ being a coherent state. In the case of zero noise $\beta$ is identical to either $\alpha$, $-\alpha$, or $0$, if one listens at an output port where the vacuum state is present. To distinguish between these three events, a threshold $\epsilon>0$ is set and the detected value $x$ is declared as $\alpha$ if $x\geq\sqrt{n}\alpha-\epsilon$, as $-\alpha$ if $x\leq-\sqrt{n}\alpha+\epsilon$ and as $0$ if none of the above holds. The corresponding detection probabilities are equal to
    \begin{align}
        \mathbb P(\alpha|\beta)
            &=\frac{1}{2}\left\{ 1-\erf\left(\sqrt{2}(\beta-\epsilon)\right)\right\}\label{eq:detectalpha}  \\ 
        \mathbb P(0|\beta)
            &=\frac{1}{2}\left\{\erf\left(\sqrt{2}(\epsilon - \beta)\right) + \erf\left(\sqrt{2} (\epsilon + \beta)\right)\right\}\label{eq:detect0}  \\
        \mathbb P(-\alpha|\beta)
            &=\frac{1}{2}\left\{ 1 - \erf\left(\sqrt{2}(\epsilon+ \beta)\right)\right\}.
    \end{align}

    The conditional distribution \eqref{eqn:conditional-distribution} of the output port events given input $(\beta,k)$ and phase noise realization $\phi_1,\ldots,\phi_n$ can - due to the i.i.d. property of $\Phi_1,\ldots,\Phi_n$ - be calculated as 
    \begin{align}
        p_C(y|\beta)=\mathbb P(y|2^{-n/2}\beta\textstyle\sum_m e^{-\mathbbm{i}\phi_m})\\
        p_I(y|\beta)=\mathbb P(y|2^{-n/2}\beta\textstyle\sum_{m}{n}(-1)^m e^{-\mathbbm{i}\phi_m}).
    \end{align}
    For the simulations, we sampled $1000$ realizations of the phase noise values $(\phi_1,\ldots,\phi_n)$ to obtain an empirical approximation to \eqref{eqn:conditional-distribution}.
    \subsection{Proofs}

To prove the convergence of the joint detection receiver towards the expected value and the normal distribution we will have to first define the expected value and variance of the von Mises distribution.

\begin{lemma}[Raw Moments of the von Mises Distribution \label{lem:raw moments}]
Let $f(\theta;\beta,\kappa)= \frac{1}{2\pi I_0(\kappa)}e^{\kappa \cos(\theta-\beta)}$ be the von Mises distribution, with $0 \leq \beta < 2\pi$ and $\kappa \geq 0$ being parameters and $I_n$ the Bessel function of order $n$. Then the raw moments of this distribution are
\begin{align}
    m_n = E[X^n] = \frac{ I_n(\kappa)}{I_0(\kappa)} e^{\mathbbm{i}n\beta}
\end{align}
According to \cite{jammalamadaka2001topics} the central trigonometric moments are
\begin{align}
    \alpha_n^* = \frac{I_n(\kappa)}{I_0(\kappa)}\cdot \cos(n\beta),
\end{align}
whereas $\beta_n^* = 0$ due to the symmetry of the von Mises density.
\end{lemma}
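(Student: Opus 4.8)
The plan is to compute the $n$-th raw (trigonometric) moment directly from the density and to reduce the resulting integral to the classical integral representation of the modified Bessel function $I_n$. Interpreting $X=e^{\mathbbm{i}\Theta}$ with $\Theta$ distributed according to $f(\cdot;\beta,\kappa)$, so that $m_n=\mathbb{E}[X^n]=\mathbb{E}[e^{\mathbbm{i}n\Theta}]$, I would first write the moment out as an integral over one period,
\[
  m_n=\frac{1}{2\pi I_0(\kappa)}\int_0^{2\pi}e^{\mathbbm{i}n\theta}\,e^{\kappa\cos(\theta-\beta)}\,d\theta,
\]
and then substitute $\psi:=\theta-\beta$. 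Since the integrand is $2\pi$-periodic the limits are unchanged by this shift, and the factor $e^{\mathbbm{i}n\theta}=e^{\mathbbm{i}n\beta}e^{\mathbbm{i}n\psi}$ pulls the constant phase $e^{\mathbbm{i}n\beta}$ out of the integral, leaving
\[
  m_n=\frac{e^{\mathbbm{i}n\beta}}{2\pi I_0(\kappa)}\int_0^{2\pi}e^{\mathbbm{i}n\psi}\,e^{\kappa\cos\psi}\,d\psi.
\]

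Next I would split $e^{\mathbbm{i}n\psi}=\cos(n\psi)+\mathbbm{i}\sin(n\psi)$ and handle the two resulting integrals separately. The cosine integral is, up to the normalisation factor $2\pi$, precisely the classical identity $I_n(\kappa)=\tfrac{1}{2\pi}\int_0^{2\pi}e^{\kappa\cos\psi}\cos(n\psi)\,d\psi$ for the modified Bessel function of the first kind, which I would invoke as a known representation. The sine integral vanishes: the weight $e^{\kappa\cos\psi}$ is even in $\psi$ about $0$ while $\sin(n\psi)$ is odd, so integrating over a full symmetric period kills it. Combining the two facts gives $\int_0^{2\pi}e^{\mathbbm{i}n\psi}e^{\kappa\cos\psi}\,d\psi=2\pi I_n(\kappa)$, and hence $m_n=(I_n(\kappa)/I_0(\kappa))\,e^{\mathbbm{i}n\beta}$, as claimed. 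The stated trigonometric moments then follow by separating $m_n$ into its real and imaginary parts, the vanishing of $\beta_n^*$ being a direct consequence of the reflection symmetry of the density about its mean direction, in agreement with \cite{jammalamadaka2001topics}.

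The only genuinely non-routine step is recognising the post-substitution integral as the Bessel integral and correctly discarding the imaginary part by parity; everything else is a change of variables and the extraction of a constant phase. I would therefore make the parity argument explicit, for instance by first passing to the symmetric interval $[-\pi,\pi]$ before splitting into even and odd parts, since that is where a sign or symmetry slip is most likely to occur.
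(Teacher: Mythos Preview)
Your argument is correct and in fact more detailed than the paper's own treatment: the paper does not supply a proof of this lemma at all but simply refers the reader to \cite[Chapter~2.2.4]{jammalamadaka2001topics}. Your direct computation via the shift $\psi=\theta-\beta$, the parity argument eliminating the sine part, and the identification of the remaining cosine integral with the standard representation $I_n(\kappa)=\tfrac{1}{2\pi}\int_0^{2\pi}e^{\kappa\cos\psi}\cos(n\psi)\,d\psi$ is exactly the textbook derivation one would find there, so there is nothing to compare.
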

For the proof see \cite[Chapter 2.2.4]{jammalamadaka2001topics}.

Now we can apply Hoeffding's inequality \cite[p.18, p.30]{vershynin_2018} to the Hadamard receiver with coherent states with phase noise.

\begin{proof}[Proof of Theorem \ref{thm:result}]
    Let $\phi_1,\ldots,\phi_n\in[0,2\pi)$ be a realization of the phase noise. Under this realization, the $k$-th sequence of signals $A_k := \alpha\cdot (H_{k,1}, \ldots, H_{k,n})$ is transformed to 
    \begin{align}
        \hat A_k := \alpha\cdot (H_{k,1}\exp{\mathbbm{i}\phi_1}, \ldots, H_{k,n}\exp{\mathbbm{i}\phi_n})
    \end{align} 
    and the $k$-th output of the Hadamard transform $H$ applied to $\hat A_k$ satisfies \begin{align}
        \tfrac{\alpha}{\sqrt{n}}\sum_m H_{k,m}H_{m,k}e^{\mathbbm{i}\Phi_m}=\tfrac{\alpha}{\sqrt{n}}\sum_me^{\mathbbm{i}\Phi_m}. 
    \end{align} 
    Since the right hand side is a sum of iid random variables we get from Hoeffding's inequality applied separately to the real and imaginary parts:
    \begin{align}
         \mathbb P\bigg(\bigg|\sum_m H_{k,m}H_{m,k}e^{\mathbbm{i}\Phi_m} - n\mathbb E_\mu [e^{\mathbbm{i}\Phi}]\bigg|\geq t\bigg)&\leq 4e^{-t^2/2n},
    \end{align}
with $\mathbb E_\mu [e^{\mathbbm{i}\Phi}]$ being the expectation value of $e^{\mathbbm{i}\Phi}$. 
Further for all $i\neq k$:
    \begin{align}
       \mathbb P\bigg(\bigg|\tfrac{\alpha}{\sqrt{n}}\sum_m H_{k',m}H_{m,k}e^{\mathbbm{i}\Phi_m}\bigg|\geq t\bigg)&\leq4e^{-t^2/n}
    \end{align}
\end{proof}

\begin{proof}[Proof of Theorem \ref{thm:capacity-bound}]
Assume $w(y^n|x,k)=g(y_k|x)\cdot\prod_{k'\neq k}d(y_{k'})$ for some conditional probability distribution $g$ and probability distribution $d$ on $\{-1,0,1\}$. To derive a lower bound on the capacity of such a system we let without loss of generality the input signals, consisting of BPSK symbols $x\in\{-1,1\}$ (where we set $\alpha=1$ without loss of generality) and input ports, are chosen uniformly at random. Then, the output distribution $q$ of the symbols $y^n\in\{-1,0,1\}^n$ can be written as
\begin{align}
    q(y^n)
        &=\sum_{x,k}\tfrac{1}{2n}g(y_k|x)\prod_{k\neq k'}d(y_{k'}).
\end{align} 
We can thus decompose this density as one truncated version on the set $A:=\{y^n:N(0|y^n)=n-1\}$ and $B:=A^\complement$ being the complement. Denote the respective re-scaled probabilities of $q$ restricted to $A$ or $B$ $q_A$ and $q_B$. Due to concavity of the entropy we have
\begin{align}
    H(q)
        &\geq q(A)H(q_A).
\end{align}
Then, for $y^n\in A$, without loss of generality $y_1=1$,
\begin{align}
    &q_A(y^n)=\tfrac{1}{2n}\bigg(g(y_1|y_1)\cdot d(0)^{n-1} + g(y_1|-y_1)\cdot d(0)^{n-1} +\nonumber\\
    &\qquad \sum_{k=2}^n\sum_{x}g(0|x)\cdot d(0)^{n-1}\bigg)\\
    &= \tfrac{d(0)^{n-1}g(1|1)}{2n}\bigg(1 + \tfrac{g(1|-1) + (n-1)\sum_{x}g(0|x)}{g(1|1)}\bigg). \label{eq:qa}
\end{align}
With $\pi$ denoting the uniform distribution on $A$ we get
\begin{align}
    q_A(y^n)=\lambda\pi(y^n)+(1-\lambda)\hat q_A(y^n)
\end{align}
with $\hat q_A$ defined in the obvious way from \eqref{eq:qa} and 
\begin{align}
    \lambda:=d(0)^{n-1}g(\alpha|\alpha).
\end{align}
Using concavity of the entropy and $H(\pi)=\log(2n)$ we get
\begin{align}
    H(q_A) &\geq
    \lambda(K+1) 
\end{align}
Now if the detector is well designed, then $d(0)\approx1$ and $g(y_1|y_1)\approx1$, so that we get $\lambda\approx1$ and therefore $H(q_A)\approx K+1$. 
Following our calculations, the capacity is lower bounded by
\begin{align}
    C(b,E,n)&\geq I(Y^n;AN)\\
    &\geq\lambda(K+1) - H(Y^n|AN),
\end{align}
where $A$ is a random variable describing the random choice of phases and $N$ a random variable describing the random choice of input port.
The distribution of $Y^n$ given $k$ and $x$ does not depend on the particular choice of $k,x$. Thus for the calculation of $H(Y^n|AN)$ it is sufficient to calculate the following:
\begin{align}
    -H&(Y^n|X=1,N=1) = \sum_{y^n}w(y^n|1,1)\log w(y^n|1,1)\\
        &= g(1|1)d(0)^{n-1}\log(g(1|1)d(0)^{n-1})  \nonumber\\
           &\qquad+ g(-1|1)d(0)^{n-1}\log(g(-1|1)d(0)^{n-1})\nonumber\\
            &\qquad -H(g(\cdot|1)) - g(\cdot|1)(n-1)H(d(\cdot))\\
        &= \lambda\log(\lambda) + 
            g(-1|1)d(0)^{n-1}\log(g(-1|1)d(0)^{n-1}) \nonumber\\
            &\qquad-H(g(\cdot|1)) - g(\cdot|1)(n-1)H(d(\cdot)).
\end{align}
Thus if $g(1|1)\geq1-\delta$ and $d(0)\geq1-\delta$ we get
\begin{align}
    -H(Y^n|AN) &\geq (1-\delta)\log(1-\delta) + \delta\log(\delta)\nonumber\\
    &\qquad - h(\delta) - \delta - \delta(n-1)(h(\delta) + \delta)\\
        &\geq -2\cdot h(\delta) - 2\delta - \delta(n-1)h(\delta). 
\end{align}
It thus follows with $\delta':=1-\delta$
\begin{align}
    C(b,E,n) &\geq \delta'(K+1) -2\cdot h(\delta) - 2\delta - \delta(n-1)h(\delta)
\end{align}
and for $\delta<\tfrac{1}{n-1}<\tfrac{1}{n}$
\begin{align}
    C(b,E,n)\geq (1-\delta)(K+1) -5 h(\delta).
\end{align}
\end{proof}

\end{section}



\begin{section}{Conclusions}
We have detailed the concept of a (quantum) joint detection receiver as a pre-processing step in classical receiver design. Improving upon earlier modelling steps, we have incorporated a scaling of phase noise with the baud-rate. Structural insights have been obtained by theoretical analysis. Our simulation results indicate that the theoretical performance predictions made in earlier works will persist in more realistic situations.

\section*{Acknowledgment}
Das Projekt/Forschungsvorhaben ist Teil der Initiative Munich Quantum Valley, die von der Bayerischen Staatsregierung aus Mitteln der Hightech Agenda Bayern Plus gefördert wird.
The project/research is part of the Munich Quantum Valley, which is supported by the Bavarian state government with funds from the Hightech Agenda Bayern Plus. Funding from the Federal Ministry of Education and Research of Germany, programme "Souver\"an. Digital. Vernetzt." joint project 6G-life, project identification number: 16KISK002 (ZA,JN), DFG Emmy-Noether program under grant number NO 1129/2-1 (JN) and support of the Munich Center for Quantum Science and Technology (MCQST) are acknowledged. Boulat A. Bash's work was supported in part by the National Science Foundation under Grant No. CCF-2006679.
\bibliographystyle{IEEEtran}
\bibliography{references.bib}
\end{section}
\end{document}